\definecolor{webgreen}{rgb}{0,.5,0}
\definecolor{webbrown}{rgb}{.6,0,0}
\newcommand{\seqnum}[1]{\href{https://oeis.org/#1}{\rm \underline{#1}}}
\def\AND{\, \wedge \,}
\DeclareMathOperator{\real}{Re}
\DeclareMathOperator{\sgn}{sgn}
\def\Enn{\mathbb{N}}
\begin{document}

\theoremstyle{plain}
\newtheorem{theorem}{Theorem}
\newtheorem{corollary}[theorem]{Corollary}
\newtheorem{lemma}[theorem]{Lemma}
\newtheorem{proposition}[theorem]{Proposition}

\theoremstyle{definition}
\newtheorem{definition}[theorem]{Definition}
\newtheorem{example}[theorem]{Example}
\newtheorem{conjecture}[theorem]{Conjecture}

\theoremstyle{remark}
\newtheorem{remark}[theorem]{Remark}

\title{The Tribonacci constant and finite automata}

\author{Jeffrey Shallit\footnote{Research supported by NSERC grant 2024-03725.}\\
School of Computer Science\\
University of Waterloo\\
Waterloo, ON N2L 3G1 \\
Canada\\
\href{mailto:shallit@uwaterloo.ca}{\tt shallit@uwaterloo.ca}}

\maketitle

\vskip .2 in
\begin{abstract}
We show that there is no automaton accepting the Tribonacci representations of $n$ and $x$ in parallel, where $\psi = 1.839\cdots$ is the Tribonacci constant, and $x= \lfloor n \psi \rfloor$.  Similarly, there is no Tribonacci automaton
generating the Sturmian characteristic word with slope $\psi-1$.
\end{abstract}

\allowdisplaybreaks

\section{Introduction}

The Fibonacci numbers are defined by the initial values
$F_0 = 0$, $F_1 = 1$, and the recurrence $F_i = F_{i-1} + F_{i-2}$
for $i \geq 2$.
We can define a numeration system---sometimes called the Zeckendorf
numeration system--- based on $F_i$ for $i \geq 2$, writing every integer $n$ uniquely as a sum
of distinct Fibonacci numbers, subject to the rule that we never
use both $F_i $ and $F_{i+1}$.  See \cite{Lekkerkerker:1952,Zeckendorf:1972}.  

Let $\varphi = (1+\sqrt{5})/2$, the golden ratio, be the positive
zero of the polynomial $X^2-X-1$.  It is well-known that the
sequence $(\lfloor \varphi n \rfloor)_{n \geq 0}$ is
Fibonacci-synchronized, in the sense that there is a finite
automaton that takes as input, in parallel, the Zeckendorf representations
of $n$ and $x$ and accepts if and only if $x = \lfloor \varphi n \rfloor$.
(In fact, this is true for any quadratic irrational; see
\cite{Schaeffer&Shallit&Zorcic:2024}.)
Furthermore, the Sturmian sequence
$(\lfloor (\varphi-1)(n+2) \rfloor - \lfloor (\varphi-1) (n+1) \rfloor)_{n \geq 0}$
is Fibonacci-automatic; there is an automaton that takes $n$ as input
in Zeckendorf representation and computes the $n$'th term of the
Sturmian sequence.\footnote{Usually Sturmian sequences are indexed
starting at $1$; we have modified the definition slightly to start
the indexing at $0$.}

It is then natural to ask whether the same kind of thing holds for generalizations
of the Fibonacci numbers; for example, the Tribonacci numbers.
The answer is ``no'' (Section~\ref{main}), but also ``sort-of yes''
(Section~\ref{end}), depending on 
what generalization you demand.

\section{The Tribonacci numbers}

The Tribonacci numbers are defined by the initial values
$T_0 = 0$, $T_1 = 1$, and $T_2 = 1$, and the recurrence
$T_i = T_{i-1} + T_{i-2} + T_{i-3}$ for $i \geq 3$.  (Other authors use
different indexing.)  They have a well-known closed form, as follows:
$$ T_n = c_1 \psi^n + c_2 \alpha^n + c_3 \beta^n,$$
where
where $\psi, \alpha, \beta$ are the zeros of the polynomial
$X^3-X^2-X-1$, with $\psi = 1.839 \cdots$ the unique real zero
and complex conjugates $\alpha = -0.41964\cdots - 0.60629\cdots i$
and $\beta = -0.41964 \cdots + 0.60629\cdots i$ the two complex
zeros.  See, for example, \cite{Spickerman:1982}.

Here 
\begin{align*}
c_1 &= {{\psi} \over {(\psi-\alpha)(\psi-\beta)}} =
0.336228 \cdots \\
c_2 &= {{\alpha} \over {(\alpha-\psi)(\alpha-\beta)}}  =
-0.16811 \cdots + 0.19832 \cdots i \\
c_3 &= {{\beta} \over {(\beta-\psi)(\beta-\alpha)}}  = 
-0.16811 \cdots - 0.19832 \cdots i \\
\end{align*}
Notice that $|\alpha| = | \beta | = 0.7373527\cdots < 1$, and hence
$T_n = c_1 \psi^n + o(1)$.

\section{Tribonacci automata}

In analogy with Zeckendorf representation,
natural numbers can be represented uniquely as a sum of 
distinct Tribonacci numbers $\sum_{2 \leq i \leq t} e_i T_i$
where $e_i \in \{0,1\}$, provided we never use three consecutive
Tribonacci numbers in the representation.  Such a representation
can be denoted as a binary word of the form $e_t e_{t-1} \cdots e_2$.
See \cite{Carlitz&Scoville&Hoggatt:1972}.

We will need two sorts of automata that take the Tribonacci representation
of natural numbers as inputs.

Let $f: \Enn \rightarrow \Enn$ be a function.
The first kind of automaton we need
is a synchronized DFA (deterministic finite automaton)
that takes $n$ and $x$ in parallel
as input (represented in Tribonacci form) and accepts if and only
if $x = f(n)$.  In this case we think of the automaton as computing
the function $f(n)$.

The second kind of automaton we need is a DFAO (deterministic finite automaton
with output).  This is like a DFA, but has outputs associated with states.
In this case we can think of the DFAO as computing a sequence $(a_n)_{n \geq 0}$ over a finite alphabet:  in comes the Tribonacci representation of $n$,
and $a_n$ is the output associated with the last state reached.

\section{The main theorem}
\label{main}

Define $a(n) = \lfloor \psi n \rfloor$ and
$b(n) = \lfloor (\psi-1) (n+2) \rfloor - \lfloor (\psi-1) (n+1)  \rfloor$
for $n \geq 0$.
Our main result
is the following:  
\begin{theorem}
\leavevmode
\begin{itemize}
\item[(a)] The sequence $(a(n))_{n \geq 0}$ is not Tribonacci-synchronized.
\item[(b)] The sequence $(b(n))_{n \geq 0}$ is not Tribonacci-automatic.
\end{itemize}
\end{theorem}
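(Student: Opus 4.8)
The plan is to reduce both parts to a single arithmetic fact about the sign of an oscillating error term, and then to exploit the fact that a finite automaton reading an eventually periodic input must behave eventually periodically. Using the closed form $T_n = c_1\psi^n + c_2\alpha^n + c_3\beta^n$, I would set $\epsilon_k := \psi T_k - T_{k+1} = c_2(\psi-\alpha)\alpha^k + c_3(\psi-\beta)\beta^k$. Since $c_3 = \overline{c_2}$, $\beta = \overline{\alpha}$, and $\psi$ is real, this is a real number equal to $2\real\left(c_2(\psi-\alpha)\alpha^k\right) = 2|c_2(\psi-\alpha)|\,|\alpha|^k\cos(k\theta+\phi)$, where $\theta = \arg\alpha$ and $\phi = \arg(c_2(\psi-\alpha))$. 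Because $|\alpha|<1$ we have $\epsilon_k \to 0$, and $\sgn \epsilon_k = \sgn\cos(k\theta+\phi)$; moreover $\epsilon_k \neq 0$ for all $k$, since $\psi T_k$ is irrational. The key claim is that the $0/1$ sequence $S(k) := [\epsilon_k < 0]$ is \emph{not} eventually periodic: if $S$ had period $p$, then $\cos(k\theta+\phi)$ and $\cos((k+p)\theta+\phi)$ would share a sign for all large $k$, which the density of $(k\theta \bmod 2\pi)$ on the circle forbids, provided $\theta/\pi$ is irrational.

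For part (a), suppose a synchronized DFA $A$ computes $a$. For large $k$ we have $|\epsilon_k|<1$ and $\psi T_k = T_{k+1}+\epsilon_k$, so $a(T_k) = \lfloor \psi T_k\rfloor = T_{k+1}$ when $\epsilon_k>0$ and $T_{k+1}-1$ when $\epsilon_k<0$. The Tribonacci representations are $T_k = 1\,0^{k-2}$ and $T_{k+1} = 1\,0^{k-1}$; padding the former with a leading zero, the parallel input encoding the pair $(T_k,T_{k+1})$ is the word $(0,1)(1,0)(0,0)^{k-2}$. By definition $A$ accepts this word if and only if $a(T_k) = T_{k+1}$, that is, if and only if $\epsilon_k>0$. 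But after the fixed prefix $(0,1)(1,0)$ the machine sits in a single state and then reads $(0,0)^{k-2}$, so by the pigeonhole principle the state reached, and hence the accept/reject decision, is eventually periodic in $k$. This would make $[\epsilon_k>0]$ eventually periodic, contradicting the first paragraph.

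For part (b), write $\gamma := \psi-1$ and compute $\gamma T_k = \psi T_k - T_k = (T_{k-1}+T_{k-2}) + \epsilon_k$, so for large $k$ the fractional part $\{\gamma T_k\}$ equals $\epsilon_k$ (near $0$) when $\epsilon_k>0$, and $1+\epsilon_k$ (near $1$) when $\epsilon_k<0$. Since $b(n)=1$ exactly when $\{\gamma(n+1)\} > 1-\gamma$, taking $n = T_k-1$ gives $b(T_k-1) = [\{\gamma T_k\} > 1-\gamma] = [\epsilon_k<0] = S(k)$ for all large $k$. The Tribonacci representation of $T_k-1$ is precisely the length-$(k-2)$ prefix of the purely periodic word $(110)^{\omega}$. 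A DFAO computing $b$, reading successively longer prefixes of this fixed periodic word, reaches states that are eventually periodic in the prefix length, so its outputs $b(T_k-1)$ are eventually periodic in $k$, again contradicting the aperiodicity of $S$.

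The main obstacle is the single genuinely arithmetic ingredient of the first paragraph: establishing that $\theta = \arg\alpha$ is an irrational multiple of $\pi$, equivalently that the ratio $\alpha/\beta$ of the two complex conjugate roots of $X^3-X^2-X-1$ is not a root of unity. I would isolate this as a lemma, proving it by ruling out $\alpha^N = \beta^N$ for conjugates of an irreducible cubic (or invoking the known spectral properties of the Tribonacci substitution, which already guarantee equidistribution of $(k\theta \bmod 2\pi)$). Everything else is routine: the closed-form bookkeeping for $\epsilon_k$, the explicit and easily verified representations of $T_k$, $T_{k+1}$, and $T_k-1$, and the standard pigeonhole step turning an eventually periodic input into an eventually periodic automaton response.
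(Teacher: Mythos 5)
Your proof is correct, and its arithmetic core coincides with the paper's: both rest on the sign of $\epsilon_k=\psi T_k-T_{k+1}=2\real\bigl(c_2(\psi-\alpha)\alpha^k\bigr)$, on the fact that $\arg\alpha$ is not a rational multiple of $\pi$ (equivalently, $\alpha/\beta$ is not a root of unity), and on density of $k\arg\alpha \bmod 2\pi$ in the circle to conclude that the sign sequence is not eventually periodic; indeed your aperiodicity claim for $S(k)$ is exactly the paper's claim that for every $d\geq 1$ there are infinitely many $m$ with $c(T_m)\neq c(T_{m+d})$, which it proves via Kronecker's theorem. Where you genuinely differ is the automaton-theoretic layer. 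The paper derives (a) from (b): a synchronized automaton $A$ for $a$ would yield, via B\"uchi's theorem applied to the formula $\exists y,z\ A(n+2,y) \AND A(n+1,z) \AND z=y+2$, an automaton for $b$; it then recodes $b$ into $c(n)=\lfloor\psi(n+1)\rfloor-\lfloor\psi n\rfloor$ and rules out an automaton for $c$ by Myhill--Nerode, distinguishing the states reached on inputs $10^i$ by tails $0^k$. You instead treat (a) and (b) independently and directly: for (a) you run the synchronized DFA on the padded pair words $(0,1)(1,0)(0,0)^{k-2}$ encoding $(T_k,T_{k+1})$ and use pigeonhole to make acceptance eventually periodic in $k$; for (b) you evaluate $b$ at $T_k-1$, whose representation is the length-$(k-2)$ prefix of $(110)^\omega$, and apply the same pigeonhole argument to the DFAO. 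Your route is more elementary and self-contained---it needs no logic-to-automata translation at all---at the cost of verifying two small representation facts; the paper's reduction is shorter, lets a single non-automaticity argument do double duty, and its Myhill--Nerode step is of course just the contrapositive of your eventual-periodicity step.

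One caution about the lemma you defer: ``ruling out $\alpha^N=\beta^N$ for conjugates of an irreducible cubic'' is false at that level of generality---for $X^3-2$ the ratio of the two complex conjugate roots is a primitive cube root of unity. The Galois argument does work here, but it must invoke the Pisot property of the Tribonacci cubic: if $\alpha^N=\beta^N$, apply an automorphism of the splitting field sending $\alpha\mapsto\psi$; the image of $\beta$ is $\alpha$ or $\beta$, so taking moduli gives $\psi^N=|\alpha|^N$, contradicting $\psi>1>|\alpha|$. (The paper instead notes that $\alpha/|\alpha|$ satisfies an explicit degree-$12$ polynomial and is not a root of unity.) With the lemma stated and proved for this cubic rather than for arbitrary irreducible cubics, your proof is complete.
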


\begin{proof}
Suppose $(a(n))_{n \geq 0}$ is Tribonacci-synchronized.  This means
there is an automaton $A$ accepting exactly those pairs $(n,x)$
such that $x = \lfloor \psi n \rfloor$.  Then we could construct
construct an automaton for $(b(n))_{n\geq 0}$ by translating the
first-order formula
$$ \exists y, z \ A(n+2,y) \AND A(n+1,z) \AND z=y+2 $$
into an automaton using B\"uchi's theorem
\cite{Buchi:1960,Bruyere&Hansel&Michaux&Villemaire:1994}.

To prove (a) it
suffices to show that $(b(n))_{n \geq 0}$ is not Tribonacci-automatic.
If it were, then $c(n) = \lfloor \psi (n+1) \rfloor - \lfloor \psi n  \rfloor$
would also be Tribonacci-automatic, as $(c(n))_{n \geq 0}$ is just a shift
and recoding of the values of $(b(n))_{n \geq 0}$.
So it suffices to show $(c(n))_{n \geq 0}$ is not Tribonacci-automatic.
We suppose it is and get a contradiction.  This will prove both (a) and (b).

Let $C$ be a Tribonacci automaton accepting $c(n)$.  Consider the states $q_i$
of $C$ reached on inputs of the form $1 0^i$ for $i \geq 0$; these 
correspond to inputs that are Tribonacci numbers.  We will show that
all of these states are pairwise distinguishable in the Myhill-Nerode
sense \cite[\S 3.4]{Hopcroft&Ullman:1979}
by a word of the form $0^k$.  More precisely, we will show that
for all $i$ and $j$, if $i<j$ then there exists $k$ such that
$c(T_{i+k})$ and $c(T_{j+k})$ take different values.

Since $1 < \psi < 2$, it is clear that $c(i) \in \{1,2\}$.  It is also easy
to see that $c(i) = 1$ if and only if $\{ \psi i \} \in [0, 2-\psi)$.  
On the other hand, from the explicit formula for Tribonacci numbers
we have
\begin{align*}
\psi T_n - T_{n+1} &=  
\psi(c_1 \psi^n + c_2 \alpha^n + c_3 \beta^n) - (
c_1 \psi^{n+1} + c_2 \alpha^{n+1} + c_3 \beta^{n+1} ) \\
&= c_2(\psi-\alpha)\alpha^n + c_3(\psi-\beta)\beta^n ,
\end{align*}
and hence
\begin{align*}
\{ \psi T_n \} &= \{ c_2(\psi-\alpha)\alpha^n + c_3(\psi-\beta)\beta^n \} \\
&= \{ 2 \real c_2 (\psi-\alpha) \alpha^n \}.
\end{align*}
Now $| c_2 (\psi-\alpha) | = 0.608085\cdots$ and
$|\alpha| = 0.73735 \cdots$, so it is easy to check that
$|2 \real c_2 (\psi-\alpha) \alpha^n| \rightarrow 0$ as
$n \rightarrow \infty$, and furthermore
$ |2 \real c_2 (\psi-\alpha) \alpha^n| < 2-\psi$ for $n \geq 5$.
Thus we see that $c(T_n) = 1$ for $n \geq 5$
if and only if $\sgn \real c_2 (\psi-\alpha) \alpha^n = +1$.

Now let $\gamma = \arg c_2 (\psi-\alpha) = 2.536155\cdots$
and $\zeta = \arg \alpha = 4.10695 \cdots$.  It then
follows that $c(T_n)$, for $n \geq 5$, and hence
$\sgn \real c_2 (\psi-\alpha) \alpha^n$,
depends on the value of $v(n) := \gamma + n \zeta \bmod 2\pi$.
If $0 < v(n) < \pi/2$, or $v(n) > 3\pi/2$, then
$c(T_n) = +1$; otherwise $c(T_n) = 2$.
(Note that $\zeta$ and $2 \pi$ are linearly independent over the rationals;
if there were a rational relation connecting them, then
$\alpha/|\alpha|$ would be a root of unity, which it is not; it satisfies
the equation $X^{12} + 4X^{10} + 11X^8+ 12X^6 + 11X^4 + 4X^2 + 1$.)

We now claim that for all $d \geq 1$ there exist infinitely many $m$
such that $c(T_m) \not= c(T_{m+d})$.   There are two cases.

\bigskip

\noindent{\it Case 1:}  $0 < \zeta d \bmod 2\pi < \pi$.
Let $\tau = \pi-(\zeta d \bmod 2\pi) > 0$.
By Kronecker's theorem 
\cite[Thm.~438]{Hardy&Wright:1985}, there
are infinitely many $m$ such that $m \zeta \bmod 2\pi \in (\pi/2 - \tau, \pi/2)$.
For such $m$ we have $c(T_m) = 1$ and $c(T_{m+d}) = 2$.

\bigskip

\noindent{\it Case 2:}  $\pi < \zeta d \bmod 2\pi < 2\pi$.
Let $\tau = 2\pi - (\zeta d \bmod 2\pi) > 0$.
By Kronecker's theorem there are
infinitely many $m$ such that $m \zeta \bmod 2\pi \in (3\pi/2-\tau, 3\pi/2)$.
For such $m$ we have $c(T_m) = 2$ and $c(T_{m+d}) = 1$.

\bigskip

Now let $i < j$ and set $d = j-i$.  By the above we can find infinitely
many $m$ such that $c(T_m) \not= c(T_{m+d})$.  Choose $m > i+5$ and
set $k = m-i$.   Then $c(T_{i+k}) = c(T_m) \not= c(T_{m+d}) = c(T_{j+k})$,
as desired.

Hence the states $q_i$ and $q_j$ of the automaton $C$
are distinguished by the word
$0^k$, and then by the Myhill-Nerode theorem there is no finite
automaton computing $c(n)$.

\end{proof}

\section{Implications for logic}

Mousavi et al.~\cite{Mousavi&Schaeffer&Shallit:2016} and, independently,
Hieronymi \cite{Hieronymi:2016}, showed
that the map $n \rightarrow \lfloor \varphi n \rfloor$ for
$\varphi = (1+\sqrt{5})/2$ is expressible in the first-order theory
of the structure $\langle \Enn, +, V(n) \rangle$, where $V(n)$ denotes the
smallest Fibonacci number $F_i$ in the Zeckendorf representation of $n$.
As a consequence the first-order theory
of $\langle \Enn, +, n \rightarrow \lfloor \varphi n \rfloor \rangle$ is decidable.
Later this was also shown by Khani and Zarei \cite{Khani&Zarei:2023} by a different method.

It would be natural to suspect that, analogously, the
map  $n \rightarrow \lfloor \psi n \rfloor$ would be expressible in
the first-order theory
of the structure $\langle \Enn, +, V'(n) \rangle$ where $\psi$ is as in previous sections, and $V'(n)$ is the smallest Tribonacci number 
$T_i$ appearing in the Tribonacci
representation of $n$.  However, we have the following:
\begin{theorem}
The map $n \rightarrow \lfloor \psi n \rfloor$ is not expressible
in the first-order theory of $\langle \Enn, +, V'(n) \rangle$.
\end{theorem}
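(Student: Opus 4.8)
The plan is to reduce this to Theorem 1(a), which we have already proved. The key observation is that the structure $\langle \Enn, +, V'(n) \rangle$ is intimately tied to Tribonacci-automatic sequences and Tribonacci-synchronized sequences. Specifically, I expect that a subset $S \subseteq \Enn^k$ is definable in this structure if and only if it is Tribonacci-automatic (i.e., the characteristic sequence of $S$, read via Tribonacci representations, is computed by a DFAO). This is the analogue of the classical result of Bruyère, Hansel, Michaux, and Villemaire \cite{Bruyere&Hansel&Michaux&Villemaire:1994} characterizing $b$-recognizable sets as precisely those definable in $\langle \Enn, +, V_b \rangle$, where $V_b(n)$ is the largest power of $b$ dividing $n$. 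Here the function $V'(n)$, returning the smallest Tribonacci number in the representation of $n$, plays the role that $V_b$ plays in the base-$b$ setting: it encodes enough information about the Tribonacci representation to recover recognizability.

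First I would establish (or cite) the equivalence between first-order definability in $\langle \Enn, +, V'(n) \rangle$ and Tribonacci-recognizability. The forward direction, that every formula defines a Tribonacci-automatic set, follows because addition is Tribonacci-synchronized (the Tribonacci numbers satisfy a linear recurrence, so carries can be handled by a finite automaton) and $V'$ is readily computed by an automaton scanning the Tribonacci representation; Büchi's theorem then closes the class of automatic relations under the first-order connectives and quantifiers. The converse direction, that every Tribonacci-automatic set is definable, is the more delicate half and is where I would lean on the structure of the Tribonacci numeration system: one must show that the predicate ``$e$ is a valid Tribonacci representation'' and the indexing relations between a number and the positions of its representation are all expressible using $+$ and $V'$.

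Granting this equivalence, the argument concludes quickly. Suppose, for contradiction, that the map $n \mapsto \lfloor \psi n \rfloor$ were expressible in the first-order theory of $\langle \Enn, +, V'(n) \rangle$. Then the graph of $a(n) = \lfloor \psi n \rfloor$, namely the relation $\{(n,x) : x = \lfloor \psi n \rfloor\}$, would be definable in that structure, hence by the equivalence it would be a Tribonacci-recognizable relation. But a binary relation that is a graph of a function and is Tribonacci-recognizable is exactly what it means for the function to be Tribonacci-synchronized. This would make $(a(n))_{n \geq 0}$ Tribonacci-synchronized, contradicting Theorem 1(a). Therefore no such first-order expression exists.

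The main obstacle is the converse direction of the definability equivalence: proving that $V'$ (the \emph{smallest} Tribonacci summand) together with $+$ suffices to define all Tribonacci-recognizable relations. In the classical base-$b$ setting $V_b$ captures the least significant nonzero digit, which bootstraps access to every digit; here one must verify that the Tribonacci analogue genuinely recovers the full digit structure of the non-standard (recurrence-based) numeration system, including the ``no three consecutive'' admissibility constraint. If a clean citation to an existing Büchi–Bruyère–Hansel–Michaux–Villemaire-style theorem for linear numeration systems is available, this obstacle dissolves into a reference; otherwise it requires the bulk of the work.
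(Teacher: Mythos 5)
Your proposal is correct and takes essentially the same route as the paper: assume definability in $\langle \Enn, +, V'(n) \rangle$, invoke the Bruy\`ere--Hansel--Michaux--Villemaire-style translation adapted to Tribonacci numeration to obtain a synchronized automaton accepting $\{(n,x) : x = \lfloor \psi n \rfloor\}$, and contradict Theorem 1(a). One remark: the ``main obstacle'' you identify---the converse direction, that every Tribonacci-recognizable relation is definable from $+$ and $V'$---is never actually used in your argument (or the paper's); only the easy forward direction (definable $\Rightarrow$ recognizable, hence synchronized) is needed, so that obstacle dissolves completely.
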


\begin{proof}
If $n \rightarrow \lfloor \psi n \rfloor$ were expressible in the
first-order theory of 
$\langle \Enn, +, V'(n) \rangle$, then essentially the same proof as in
\cite{Bruyere&Hansel&Michaux&Villemaire:1994}, adapted for Tribonacci
representation instead of base $k$, shows that there would be a computable
synchronized automaton taking $n$ and $x$ in parallel as inputs,
and accepting if $x = \lfloor \psi n \rfloor$.  But we have shown
there is no such automaton.
\end{proof}

\section{Final words}
\label{end}

The sequence \seqnum{A352719} and \seqnum{A352748} in the
On-Line Encylopedia of Integer Sequences (OEIS) \cite{oeis} record the
integers $n$ such that (in our notation) $T_n - \psi T_{n-1} > 0$
(resp., $T_n - \psi T_{n-1} < 0$).

Ultimately, what makes the Fibonacci and Tribonacci cases quite
different is that the sign of $F_{n+1} - \varphi F_n$ is periodic,
but the sign of $T_{n+1} - \varphi T_n$ is not periodic.  We expect that
similar behavior will occur in some other types of numeration
systems; for example, those based on cubic Pisot numbers having
two complex conjugates.

Although there is no synchronized Tribonacci automaton for
the sequence $n \rightarrow \lfloor \psi n \rfloor$, 
there {\it is\/} an automaton for ``almost'' this sequence.  Namely, as
shown in \cite[\S 10.12]{Shallit:2023}, there is an automaton
of $10$ states that accepts, in parallel, $n$ and the position $A_n$
of the $n$'th occurrence of the symbol $0$ 
in the infinite Tribonacci word ${\bf TR} = 0102010\cdots$ (where indexing of $\bf TR$ is done
starting at $1$).  Furthermore it is known 
\cite{Dekking&Shallit&Sloane:2020}
that
$$ \lfloor \psi n \rfloor - 1 \leq A_n \leq \lfloor \psi n \rfloor + 1.$$
Thus if we are happy with such an approximation to
$n \rightarrow \lfloor \psi n \rfloor$, an automaton suffices.

This may explain, in 
part, why Turner \cite{Turner:1989}
was able to find a clean generalization of the
Wythoff pairs to the Tribonacci case when using a function like
$A_n$, but could not succeed when using $\lfloor \psi n \rfloor$.

\section*{Acknowledgments}

I thank Joris Nieuwveld, Jason Bell, and Luke Schaeffer for their comments.

\end{document}